\newcommand{\Unit}{\mathds{1}}
\newcommand{\dbra}[1]{\langle\bra{#1}}
\newcommand{\dket}[1]{\ket{#1}\rangle}
\newtheorem{theorem}{Theorem}
\newtheorem{proposition}{Proposition}
\newtheorem*{corollary*}{Properties}
\begin{document}
\title{Pauli transfer matrix direct reconstruction: channel characterization without full process tomography}
\author{Simone Roncallo\,\orcidlink{0000-0003-3506-9027}}
	\email[Simone Roncallo: ]{simone.roncallo01@ateneopv.it}
	\affiliation{Dipartimento di Fisica, Università degli Studi di Pavia, Via Agostino Bassi 6, I-27100, Pavia, Italy}
	\affiliation{INFN Sezione di Pavia, Via Agostino Bassi 6, I-27100, Pavia, Italy}
	
\author{Lorenzo Maccone\,\orcidlink{0000-0002-6729-5312}}
	\email[Lorenzo Maccone: ]{lorenzo.maccone@unipv.it}
	\affiliation{Dipartimento di Fisica, Università degli Studi di Pavia, Via Agostino Bassi 6, I-27100, Pavia, Italy}
	\affiliation{INFN Sezione di Pavia, Via Agostino Bassi 6, I-27100, Pavia, Italy}
	
\author{Chiara Macchiavello\,\orcidlink{0000-0002-2955-8759}}
	\email[Chiara Macchiavello: ]{chiara.macchiavello@unipv.it}
	\affiliation{Dipartimento di Fisica, Università degli Studi di Pavia, Via Agostino Bassi 6, I-27100, Pavia, Italy}
	\affiliation{INFN Sezione di Pavia, Via Agostino Bassi 6, I-27100, Pavia, Italy}

\begin{abstract}
    We present a tomographic protocol for the characterization of multiqubit quantum channels. We discuss a specific class of input states, for which the set of Pauli measurements at the output of the channel directly relates to its Pauli transfer matrix components. We compare our results to those of standard quantum process tomography, showing an exponential reduction in the number of different experimental configurations required by a single matrix element extraction, while keeping the same number of shots. This paves the way for more efficient experimental implementations, whenever a selective knowledge of the Pauli transfer matrix is needed. We provide several examples and simulations.
\end{abstract}
\keywords{Quantum channels; Quantum operations; Quantum process tomography; Pauli transfer matrix; Pauli-Liouville matrix;}
\maketitle

\section{Introduction}
Quantum channels, also known as quantum operations, describe the dynamics of quantum systems that interact with their surrounding environment. Intuitively, channels behave like quantum black boxes, mapping states into other states, mathematically represented as linear completely-positive trace-preserving maps \citep{book:Nielsen}. They describe a wide variety of operations, including unitary transformations, communication and teleportation protocols \citep{art:Bennett}, or noisy processes (e.g. models like the bit-flip, the depolarizing or the amplitude damping channels \citep{book:Nielsen}, even in the presence of correlations within the system \citep{art:QUIT_CorrelatedNoise, art:QUIT_QuantumCapacity}).  

Quantum process tomography (QPT) is the identification of an unknown quantum channel, obtained by correlating a complete set of input states to a complete set of measurements at their output. We call \emph{experimental configurations}, or configurations, each couple of choices of input state and output measurement. For each configuration one must feed each state and perform the same measurement for a certain number of times, called \emph{shots}, in order to retrieve the necessary statistics. QPT can be achieved in two ways \citep{art:Mohseni}: directly, when the measurement outcome immediately enters the channel reconstruction \citep{art:Greenbaum,art:Nielsen_GateTomography,art:Bendersky_SEQPT, art:Gaikwad_SQPT,art:Gaikwad_MSQPT,art:Mohseni_DCQD}(eventually after a post-processing manipulation of the data), or indirectly, i.e. when it requires additional techniques to analyse the output state (e.g. state tomography) \citep{art:Chuang_Tomography,art:Altepeter_AAQPT,art:Mataloni}.

QPT can be directly formulated in the vectorized representation (or Pauli-Liouville representation), in which operators are mapped to column vectors and channels to matrices, called \emph{Pauli transfer matrices} (PTM) \citep{art:Greenbaum, art:Nielsen_GateTomography}. This brings several advantages, e.g. the action of the channel, identified by the PTM, becomes a matrix multiplication, simplifying its inversion and manipulation in tasks like noise deconvolution \citep{art:QUIT_NoiseDeconvolution,art:QUIT_MultiQubitNoiseDeconvolution}. In this framework, the purpose of QPT is the PTM reconstruction, which is achieved by combining the outcome of different experimental configurations into each PTM entry. As in the standard Kraus description \citep{book:Nielsen}, the number of experimental configurations required by a full PTM tomography is $d^4$, where $d$ is the dimension of the system (we do not consider overcomplete sets of states, whose cost is even higher).
\begin{figure}[b]
	\centering
	\includegraphics[width = 0.35 \textwidth]{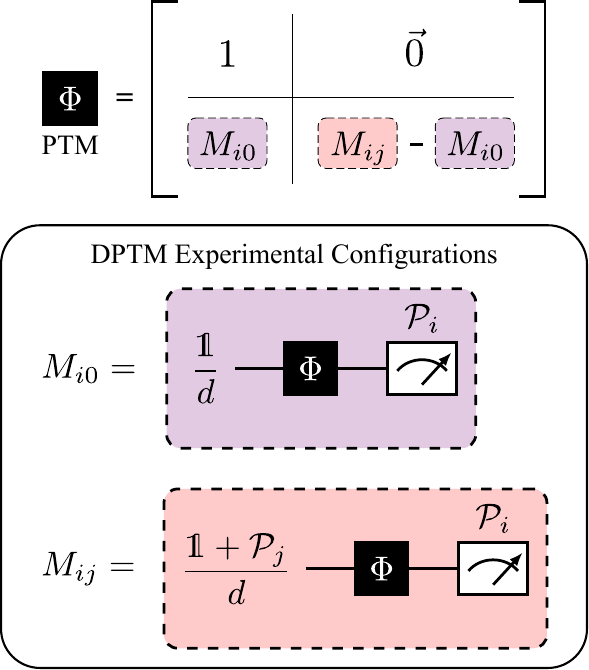}
	\caption{\label{fig:Intro}DPTM reconstruction of an unknown quantum channel $\Phi$. Two sets of experimental configurations are considered. The first one provides the characterization of the non-unital column of the channel PTM. The results in each row are then subtracted to those of the second set, completing the characterization of the remaining matrix elements. Independently from the number of qubits, the reconstruction of each PTM entry requires at most 2 experimental configurations.}
\end{figure}

In this paper we focus on ancilla-free QPT for multiqubit quantum channels, for which we propose an alternative approach that provides a direct reconstruction of the PTM (DPTM). Indeed, we consider a particular set of input states, for which each entry of the PTM is directly identified with few elements of the set of local Pauli measurements at the output of the channel, as summarized by \cref{fig:Intro}. We then compare our results to generic and to standard QPT (sQPT). We show that the DPTM reconstruction costs only $2$ experimental configurations for each PTM entry (decreasing to $1$ for unital channels) independently of the dimension of the system, providing an exponential speedup against the minimum number of configurations required by sQPT for the same task. This exponential gain is lost if one needs to reconstruct the whole PTM rather then few matrix elements. However, there are many situations where few entries are sufficient to recover the required channel characteristics, e.g. for multiparameter estimations in quantum metrology \citep{art:Giovannetti_Advances} or in assessing the unitality of a quantum channel. In contrast to other techniques like shadow tomography \citep{art:Aaronson,art:Huang-Preskill}, our approach does not improve the statistics of the reconstruction, since the total number of shots (or copies of the state) remains unchanged. Rather, it reduces the number of configurations needed, whenever only some PTM matrix elements are required or when one can introduce some prior knowledge of the channel in its characterization (examples are discussed below). Another advantage of this approach is that using less combinations of measurements can eventually reduce the systematics due to hardware errors.

Finally, we apply and simulate DPTM for two different scenarios. First, we fully characterize a single-qubit amplitude damping channel, using only $4$ experimental configurations (with respect to the $8$ required by sQPT). Then, we discuss the parameters extraction of a two-qubit correlated depolarizing channel, for which DPTM requires $2$ configurations (with respect to the $15$ required by sQPT).

\section{Quantum channels and Pauli transfer matrix}
We consider the Hilbert space of a $n$-qubits system. The basis for the set of operators is
\begin{equation}
	\big\{\sigma_{\alpha_1} \otimes \sigma_{\alpha_2} \otimes ... \otimes \sigma_{\alpha_n} \ | \ \alpha_1, \alpha_2, ..., \alpha_n = 0,1,2,3 \big\} \ ,
	\label{eq:n-basis}
\end{equation}
with $\sigma_0 = \Unit_2$, $\sigma_1 = X$, $\sigma_2 = Y$ and $\sigma_3 = Z$. We write the Pauli basis in the following notation
\begin{equation}
	\big\{ \mathcal{P}_k \ | \ k = 0,1,2,3,...,d^2-1 \big\} \ ,
	\label{eq:pauli-basis}
\end{equation}
with $d = 2^n$ and $\mathcal{P}_k$ given by the generic element of \cref{eq:n-basis} in lexicographic order.

Consider a system with quantum state $\rho$. A \emph{quantum channel} (or quantum operation) is a linear completely-positive trace-preserving (CPTP) map $\rho \to \Phi(\rho)$ \citep{book:Nielsen}. There are several ways to represent a quantum channel \citep{art:Wood_ChannelRepresentations}, e.g. the \emph{Kraus representation}, in which $\Phi(\rho)$ is described by a collection of operators $\{A_i\}_{0 \leq i \leq d^2-1}$, called Kraus operators, such that
\begin{equation}
	\Phi(\rho) = \sum_{i} A_i \rho A^{\dagger}_i \ .
	\label{eq:QuantumChannelKraus}
\end{equation}
The CPTP condition implies that
\begin{equation}
	\sum_{i} A^\dagger_i A_i = \Unit \ ,
\end{equation}
with $\Unit$ the identity operator. 

We consider the PTM representation \citep{art:Greenbaum, art:Nielsen_GateTomography} (also known as Pauli-Liouville, or superoperator,  representation), which describes $\Phi$ as a $d^2 \times d^2$ matrix 
\begin{equation}
	\Gamma_{ij} = \frac{1}{d}\Tr[\mathcal{P}_i\Phi(\mathcal{P}_j)] \ .
	\label{eq:PTMDefinition}
\end{equation} 
The PTM finds a natural application in the vectorized notation \citep{art:Greenbaum, art:Temme}, where any operator $A$ is mapped to a $1 \times d^2$ column vector 
$\dket{A}$, on which quantum channel acts through standard matrix multiplication
\begin{equation}
	\Phi(\rho) = \Gamma \dket{\rho} \ .
	\label{eq:PTMAction}
\end{equation}
This representation is equipped with the Hilbert-Schmidt inner product, so that $A_i = \langle \langle i | A\rangle \rangle$ and \cref{eq:PTMDefinition} reads
\begin{equation}
	\Gamma_{ij} = \dbra{i}\Gamma \dket{j} \ ,
\end{equation}
with $\dket{i}$ denoting the vectorized Pauli basis operator $\mathcal{P}_i$. A reshuffling transformation bijectively relates the PTM to the Choi matrix, whose spectral decomposition can lead to the original Kraus representation of the channel \citep{art:Wood_ChannelRepresentations}.

By straightforward application of \cref{eq:PTMDefinition}, the following properties of the PTM hold.
\begin{proposition}\label{th:Properties}
Consider the Hilbert space of a $n$-qubit system, with dimension $d = 2^n$. Let $\Phi$ be a quantum channel and $\Gamma$ its $d^2 \times d^2$ PTM representation, whose definition is given in \cref{eq:PTMDefinition}. The CPTP condition implies that
\begin{equation}
	\Gamma_{0j} = \delta_{0j} \ .
\end{equation}
with $0 \leq i,j \leq d^2-1$ and $\delta_{ij}$ denoting the Kronecker delta. If the channel is unital, i.e. $\Phi(\Unit) = \Unit$, then also
\begin{equation}
	\Gamma_{i0} = \delta_{i0} \ .
\end{equation}
\end{proposition}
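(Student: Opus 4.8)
The plan is to evaluate each matrix element directly from its definition in \cref{eq:PTMDefinition}, leaning on two elementary facts: that $\mathcal{P}_0 = \Unit$ is the identity on the full $n$-qubit space, and that the trace of any Pauli-basis operator satisfies $\Tr[\mathcal{P}_k] = d\,\delta_{0k}$. The latter follows at once from the tensor-product structure of \cref{eq:n-basis}: since $\Tr[\sigma_0] = 2$ while $\Tr[\sigma_1] = \Tr[\sigma_2] = \Tr[\sigma_3] = 0$, the trace factorizes as $\Tr[\mathcal{P}_k] = \prod_{m=1}^{n}\Tr[\sigma_{\alpha_m}]$, which vanishes unless every factor is $\sigma_0$, and equals $2^n = d$ in that single case.

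For the first identity I would set $i=0$ and use $\mathcal{P}_0 = \Unit$ to reduce the definition to $\Gamma_{0j} = \tfrac{1}{d}\Tr[\Phi(\mathcal{P}_j)]$. The key step is to invoke trace preservation, $\Tr[\Phi(A)] = \Tr[A]$ for every operator $A$. This is where the CPTP hypothesis enters: writing $\Phi$ in the Kraus form \cref{eq:QuantumChannelKraus} and using the cyclicity of the trace together with $\sum_i A_i^\dagger A_i = \Unit$, one gets $\Tr[\Phi(A)] = \sum_i \Tr[A_i^\dagger A_i A] = \Tr[A]$. Hence $\Gamma_{0j} = \tfrac{1}{d}\Tr[\mathcal{P}_j] = \delta_{0j}$ by the trace identity above.

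For the second identity I would instead set $j=0$, so that $\Gamma_{i0} = \tfrac{1}{d}\Tr[\mathcal{P}_i\,\Phi(\Unit)]$. Here the unitality hypothesis $\Phi(\Unit)=\Unit$ collapses this to $\Gamma_{i0} = \tfrac{1}{d}\Tr[\mathcal{P}_i]$, which again equals $\delta_{i0}$ by the same Pauli trace identity.

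The content here is a short computation rather than a genuine obstacle, but the one point requiring care is the application domain of trace preservation: it is ordinarily stated for density matrices, whereas I apply it to the Pauli operators $\mathcal{P}_j$, which are Hermitian but neither positive nor of unit trace. The extension is legitimate purely because both $\Phi$ and the trace are linear, so the identity on states propagates to their real (and indeed complex) linear span. Everything else reduces to the elementary factorization of $\Tr[\mathcal{P}_k]$.
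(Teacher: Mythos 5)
Your proof is correct and is precisely the ``straightforward application of \cref{eq:PTMDefinition}'' that the paper invokes without spelling out: the Pauli trace identity $\Tr[\mathcal{P}_k] = d\,\delta_{0k}$, combined with trace preservation for the first claim and with $\Phi(\Unit) = \Unit$ for the second. Your closing worry about applying trace preservation to non-states is moot, since your own Kraus-cyclicity computation already establishes $\Tr[\Phi(A)] = \Tr[A]$ for arbitrary operators $A$, not just density matrices.
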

Further  simplifications hold for Pauli channels, which are defined as those CPTP maps whose Kraus operators belong to the Pauli basis only and whose PTM is diagonal \citep{art:Flammia}. This specific class of channels includes the bit-flip, the depolarizing or the dephasing noises \cite{book:Nielsen}, also in presence of correlations \citep{art:QUIT_CorrelatedNoise, art:QUIT_Pauli}. See \citep{art:QUIT_NoiseDeconvolution} for some examples of single-qubit noise models and PTM.

\section{QPT and Direct PTM reconstruction}
The goal of QPT is to reconstruct an unknown quantum channel $\Phi$ from the statistics of a collection of experimental configurations. Namely, by applying a set of measurement operators $\{E_i\}_{0\leq i \leq d^2-1}$ to different choices of input states $\{\rho_j\}_{0\leq j \leq d^2-1}$ \citep{art:Greenbaum, art:Nielsen_GateTomography}. After having collected sufficient statistics (i.e. on a large number of shots), the outcome of each configuration reads
\begin{equation}
	p_{ij} = \Tr[E_i\Phi(\rho_j)] \ .
	\label{eq:QPTBeginning}
\end{equation}
Using the completeness relation, this can be written in terms of the channel PTM as \citep{art:Greenbaum}
\begin{equation}
	p = \alpha \Gamma \beta \ ,
	\label{eq:MeasurementsPTMQPT}	
\end{equation}
with
\begin{equation}
	\alpha_{ij} = \frac{1}{d}\Tr[E_i \mathcal{P}_j] \ , \quad \beta_{ij} = \Tr[\mathcal{P}_i \rho_j] \ .
	\label{eq:AlphaTensor}
\end{equation}
We refer to $\alpha$ and $\beta$ as reconstruction matrices, which, once inverted, provide the PTM in terms of the measurement data as
\begin{equation}
	\Gamma = \alpha^{-1} p \beta^{-1} 	\ ,
	\label{eq:QPTforPTM}
\end{equation}
whenever $\alpha^{-1}$ and $\beta^{-1}$ exist. 

In practice, the PTM is often reconstructed using a tomographic fitter instead of \cref{eq:QPTforPTM}, for example a least-squares minimization or a maximum likelihood estimation \citep{art:Greenbaum, art:MLE, art:Gambetta_MLE}. We do not consider any of these methods in our analysis, instead we compare our DPTM technique (which can equally benefit from them) to the reconstruction provided by \cref{eq:QPTforPTM}.

We now discuss an alternative procedure that provides a direct PTM reconstruction (DPTM) from the experimental data.

Consider a $n$-qubit system, prepared in one of the following states
\begin{equation}
	\rho_0 = \frac{\Unit}{d} , \quad \rho_k = \frac{\Unit + \mathcal{P}_k}{d} \quad \text{for } k \neq 0 \ ,
\end{equation}
which can be compactly written as\footnote{Index summation is always made explicit.}
\begin{equation}
	\rho_k =  \frac{1}{d}\big[ (1-\delta_{0k})\Unit + \mathcal{P}_k \big]\ .
	\label{eq:ChState}
\end{equation}
These states are positive semidefinite and normalized \citep{art:QUIT_MultiQubitNoiseDeconvolution}. Moreover, they are mixed, except for $n=1$ and $k \neq 0$. See \cref{sec:Appendix} for considerations on how they can be prepared. The channel $\Phi$ evolves $\rho_k$ to
\begin{equation}
	\Phi(\rho_k) = \frac{1}{d}\big[ (1-\delta_{0k})\Phi(\Unit) + \Phi(\mathcal{P}_k) \big]\ .
	\label{eq:EvolvedState}
\end{equation}
Consider the $i$-th element of the Pauli basis $\mathcal{P}_i$. As an observable, its expectation value against $\Phi(\rho_k)$ is
\begin{equation}
	\langle\mathcal{P}_i\rangle_{\Phi(\rho_j)} = \Tr[\mathcal{P}_i \Phi(\rho_j)] \ ,
	\label{eq:PauliExpectationValue}
\end{equation}
This mathematically represents each configuration outcome (combining input states and measurements at the output of the channel), which, expanded with \cref{eq:EvolvedState}, depends on the PTM as
\begin{equation}
	\langle\mathcal{P}_i\rangle_{\Phi(\rho_j)} = (1-\delta_{0j}) \Gamma_{i0} + \Gamma_{ij} \ .
	\label{eq:PTMfromMeasure}
\end{equation}
When $j=0$ the second contribution vanishes, yielding
\begin{equation}
	\langle\mathcal{P}_i\rangle_{\Phi(\rho_0)} =  \Gamma_{i0} \ .
\end{equation}
By subtracting this term from \cref{eq:PTMfromMeasure}, we complete the PTM reconstruction as  
\begin{equation}
\Gamma_{ij} = 
\begin{cases}
	\langle\mathcal{P}_i\rangle_{\Phi(\rho_0)} \quad &\text{for } j = 0\\
	\langle\mathcal{P}_i\rangle_{\Phi(\rho_j)}-\langle\mathcal{P}_i\rangle_{\Phi(\rho_0)} \quad &\text{for } j \neq 0
\end{cases} \ .
\label{eq:FinalExperimentalPTM}
\end{equation}
More compactly, the DPTM equation reads
\begin{equation}
 	\Gamma_{ij} = \langle\mathcal{P}_i\rangle_{\Phi(\rho_j)} - (1-\delta_{0j})\langle\mathcal{P}_i\rangle_{\Phi(\rho_0)} \ .
 	\label{eq:CompactFinalExperimentalPTM}
\end{equation}
This consistently satisfies all the properties of \cref{th:Properties}. 

The equation of DPTM establishes a direct relation between $\Gamma_{ij}$ and the corresponding configuration $\langle \mathcal{P}_i \rangle_{\Phi(\rho_j)}$. When the channel is completely unknown, DPTM requires as many resources as QPT, i.e. $d^2$ input states $\{\rho_j\}$ coupled to $d^2$ Pauli measurements $\{\mathcal{P}_i\}$. However, if some prior knowledge on the channel is already available (e.g. unitality or its, eventually incomplete, Kraus representation), it is possible to drop those combinations of $\rho_j$ and $\mathcal{P}_i$ already fixed by the initial information on $\Gamma_{ij}$. In this case the number of configurations reduces and DPTM requires fewer resources than QPT (see \cref{tab:Efficieny} for some examples).
\begin{table}[H]
	\centering
	\def\arraystretch{1.5}
	\setlength\tabcolsep{5pt}
	\begin{tabular}{ c|c|c|c|c } 
		& General & CPTP & Unitality & Pauli \\
		\hline
		Resources & $d^4$ & $d^2(d^2 - 1)$ & $(d^2-1)^2$ & $d^2 - 1$ \\ 
	\end{tabular}
	\caption{\label{tab:Efficieny}Example of the DPTM cost for different type of constraints, in terms of the number of experimental configurations to collect all the data and complete the reconstruction.}
\end{table}
As discussed in the next section, similar reductions apply to any partial extraction of the PTM, e.g. in the characterization of known theoretical models, in the estimation of unknown quantum parameters, or in testing the unitality of an unknown quantum channel. DPTM can efficiently solve this last task, requiring at most $d^2 - 1$ configurations. 

The relation between QPT and DPTM can be understood by  applying the set of states of \cref{eq:ChState} to \cref{eq:QPTforPTM}. With $E_{i} = \mathcal{P}_i$, then $p_{ij} = \langle \mathcal{P}_i\rangle_{\Phi(\rho_j)}$ and the reconstruction matrices read
\begin{equation}
	\alpha_{ij} = \delta_{ij} \ , \quad \beta_{ij} =  (1-\delta_{0j})\delta_{i0} +  \delta_{ij} \ ,
\end{equation}
Their inversion yields
\begin{equation}
	\alpha^{-1}_{ij} = \delta_{ij} \ , \quad \beta_{ij}^{-1} = \delta_{ij} - (1-\delta_{0j})\delta_{i0} \ .
	\label{eq:DPTMRecoMatrices}
\end{equation}
By direct substitution, this reduces \cref{eq:QPTforPTM} precisely to \cref{eq:CompactFinalExperimentalPTM}:
from a mathematical point of view, DPTM represents a particular choice of input states for QPT. Nevertheless, in the next section we show that the specific choice of states of DPTM guarantees a faster reconstruction, in terms of the number of experiments required for each PTM entry.

\section{Comparison with standard QPT}
In this section we explicitly compare DPTM with sQPT. We label with $D$ and $Q$ the quantities and results related respectively to DPTM and sQPT.

 For $n=1$, the basis of the set of operators contains four elements
\begin{equation}
	\mathcal{P}_{0} = \Unit_2 \ , \quad \mathcal{P}_{1} = X \ , \quad \mathcal{P}_{2} = Y \ , \quad \mathcal{P}_{3} = Z \ .
\end{equation}
By substitution in \cref{eq:ChState}, the set of input states reads
\begin{align}
	&\rho_0 = \frac{1}{2}\big( \ket{0}\!\bra{0} + \ket{1}\!\bra{1}\big) \ , \label{eq:SingleQDPTMMixed} \\
	&\rho_1 = \ket{+}\!\bra{+} \ ,\label{eq:SingleQDPTMMixed1} \\
	&\rho_2 = \ket{i}\!\bra{i} \label{eq:SingleQDPTMMixed2} \ , \\
	&\rho_3 = \ket{0}\!\bra{0} \label{eq:SingleQDPTMMixed3} \ .
\end{align} 
with $\sqrt{2}\ket{\pm} = \ket{0} \pm \ket{1}$ and $\sqrt{2}\ket{\pm i} = \ket{0} \pm i\ket{1}$. In this case only $\rho_0$ is mixed, we discuss its preparation in \cref{sec:Appendix}. The remaining states are pure and can be prepared as 
\begin{equation}
	\ket{\psi_1} = H\ket{0} \ , \quad 
	\ket{\psi_2} = SH\ket{0} \ , \quad
	\ket{\psi_3} = \ket{0} \ ,
\end{equation}
with $H$ and $S$ respectively the Hadamard and the phase gates \citep{book:Nielsen}. Given a state $\rho$, each Pauli measurement $\langle\mathcal{P}_i\rangle$ can be always obtained from a $Z$-measurement, i.e. by applying a unitary transformation $\rho \to U \rho U^{\dagger}$, so that
\begin{align}
	&\langle \mathcal{P}_1 \rangle = \Tr\left[ZH \rho H \right] \ , \label{eq:1QMeasurementX} \\
	&\langle \mathcal{P}_2 \rangle = \Tr\left[ ZHS^{\dagger} \rho SH\right] \ , \label{eq:1QMeasurementY} \\
	&\langle \mathcal{P}_3 \rangle = \Tr\left[ Z \rho \right] \label{eq:1QMeasurementZ} \ ,
\end{align}
where we used 
\begin{equation}
	\mathcal{P}_1 = HZH \ , \quad \mathcal{P}_2 = SH Z HS^{\dagger} \ .
\end{equation}
Each change of basis translates the respective Pauli measurements to a simple count of the $0$, $1$ occurrences in the computational basis.\footnote{For example, if $\ket{\psi_f} = c_0 \ket{0} + c_1\ket{1}$ is a pure final state, eventually obtained by applying a unitary transformation to the evolved one, then 
$\langle Z \rangle_{\rho_f} = |c_0|^2 - |c_1|^2$.}

The DPTM configurations $M^D_{ij} := \langle \mathcal{P}_i \rangle_{\Phi(\rho_j)}$ follow by coupling each input state to the set of Pauli measurements in all the possible ways, so that
\begin{equation}
\begin{split}
	&\Gamma_{i0} = M^D_{i0} \ , \\
	&\Gamma_{ij} = M^D_{ij} - M^D_{i0} \quad \text{for } j \neq 0 \ .
\end{split}
\label{eq:1QDPTM}
\end{equation}
In terms of the reconstruction matrices $\alpha_D$ and $\beta_D$ in \cref{eq:DPTMRecoMatrices}, this gives
$\alpha_D^{-1} = \Unit$ and
\begin{equation}
	\beta_D^{-1} = \begin{pmatrix}
	+1 & -1 & -1 & -1 \\
	0 & +1 & 0 & 0 \\
	0 & 0 & +1 & 0 \\
	0 & 0 & 0 & +1
	\end{pmatrix} \ .
\end{equation} 
When the channel is unital, the identification is completely direct and $\Gamma_{ij} = M^D_{ij}$. For Pauli channels, it further simplifies to $\Gamma_{kk} = M^D_{kk}$. In both cases, the uncertainty precisely comes from the measurement, without the need for any error propagation.

For comparison, we consider the standard protocol of sQPT\footnote{As an example, we use the preparation and measurement basis from \citep{book:Nielsen}, which are the same applied by default Qiskit Experiments library.} in which the set of Pauli measurements is usually coupled to the following set of states
\begin{equation}
\begin{split}
	&\rho^Q_0 = \ket{1}\!\bra{1} \ , \quad \rho^Q_1 = \ket{+}\!\bra{+} \ , \\
	&\rho^Q_2 = \ket{i}\!\bra{i} \ \ , \quad \rho^Q_3 = \ket{0}\!\bra{0} \ .
\end{split} 
\label{eq:SingleQubitsQPTInputStates}
\end{equation}
The Pauli measurements precisely matches the one of  \cref{eq:1QMeasurementX,eq:1QMeasurementY,eq:1QMeasurementZ}, with still $\alpha^{-1}_Q = \Unit$. However, the different choice of input states modifies $\beta_Q$, whose inverse now reads
\begin{equation}
	\beta^{-1}_Q = \frac{1}{2}\begin{pmatrix}
	+1 & -1 & -1 & -1 \\
	0 & +2 & 0 & 0 \\
	0 & 0 & +2 & 0 \\
	+1 & -1 & -1 & +1
	\end{pmatrix} \ .
	\label{eq:BetaMatrixsQPT}
\end{equation}
By substitution in \cref{eq:QPTforPTM}, with $M^Q_{ij} := \langle \mathcal{P}_i \rangle_{\Phi\left(\rho^Q_j\right)}$ given by \cref{eq:MeasurementsPTMQPT}, it follows
\begin{align}
	\Gamma_{i0} &= \frac{1}{2}\left(M^Q_{i0} + M^Q_{i3}\right) \ , \label{eq:Gammai0sQPT} \\
	\Gamma_{i1} &= \frac{1}{2}\left(-M^Q_{i0} + 2M^Q_{i1} - M^Q_{i3}\right) \ , \label{eq:Gammai1sQPT} \\
	\Gamma_{i2} &= \frac{1}{2}\left(-M^Q_{i0} + 2M^Q_{i2} - M^Q_{i3}\right) \ , \label{eq:Gammai2sQPT} \\
	\Gamma_{i3} &= \frac{1}{2}\left(-M^Q_{i0} + M^Q_{i3}\right) \ . \label{eq:Gammai3sQPT}
\end{align}
This implies that, for the same PTM entry, sQPT requires more experimental configurations than DPTM, and also more resources in terms of post-processing recombinations of the data into the desired outcome. For example, a full characterization of a single-qubit Pauli channel (which has diagonal PTM) costs 8 configurations to sQPT, while only 3 for DPTM.

We end this section by comparing the cost of sQPT and DPTM for the reconstruction of a single PTM entry on $n$-qubit systems. In this framework, DPTM scales always in the same way: its states are given by \cref{eq:ChState}, while its reconstruction matrix shows high sparsity
\begin{equation}
	\beta_D^{-1} = \begin{pmatrix}
	+1 & -1 & \dots & -1 \\
	0 & \ddots & 0 & 0 \\
	\vdots & 0 & \ddots & \vdots \\
	0 & 0 & \dots & +1
	\end{pmatrix} \ .
\end{equation} 
On the other hand, the $n$-qubit states for sQPT can be obtained by taking all the possible $n$-fold Kronecker products of ${\rho_i^Q}$. \cref{eq:AlphaTensor} implies that the reconstruction matrix reads $(\beta_Q)^{\otimes n}$, with inverse $(\beta_Q^{-1})^{\otimes n}$.

To compare the performance of the two methods we define $||\Gamma_{ij}||_{D}$ and $||\Gamma_{ij}||_{Q}$ as the number of experimental configurations respectively required by DPTM and sQPT.\footnote{For example, in the single-qubit case \cref{eq:Gammai1sQPT} implies that $||\Gamma_{i1}||_{Q} = 3$.} We give the following theorem, which states that DPTM performs exponentially better than sQPT in single-entry reconstructions, independently of the number of qubits.
\begin{theorem}\label{th:Theorem1}
Consider an $n$-qubit quantum system, and a quantum (unknown) channel $\Phi$. Let $\Gamma$ be its Pauli transfer matrix, reconstructed using DPTM and sQPT. Then 
\begin{equation}
	\max_{i,j}||\Gamma_{ij}||_{D} \leq \min_{i,j} ||\Gamma_{ij}||_{Q} < \max_{i,j} ||\Gamma_{ij}||_{Q} \ ,
\end{equation}
with the strict equality satisfied only for $n=1$. Indeed
\begin{equation}
||\Gamma_{ij}||_{D} = 
	\begin{cases}
		1 \quad \text{for } j = 0 \\
		2 \quad \text{for } j \neq 0
	\end{cases} \ ,
	\label{eq:nScalingDPTM}
\end{equation}
while 
\begin{align}
	\min_{i,j} ||\Gamma_{ij}||_{Q} &= 2^n \ , \label{eq:nMinScalingQPT} \\
	\max_{i,j} ||\Gamma_{ij}||_{Q} &= 3^n \ . \label{eq:nMaxScalingQPT}
\end{align}
\end{theorem}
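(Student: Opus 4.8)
The plan is to reduce the counting of experimental configurations to counting the nonzero entries in the columns of the reconstruction matrix $\beta^{-1}$, which is legitimate because in both protocols the output measurements are the Pauli operators $E_i = \mathcal{P}_i$ and hence $\alpha^{-1} = \Unit$. First I would specialize \eqref{eq:QPTforPTM} to $\alpha^{-1} = \Unit$, obtaining $\Gamma_{ij} = \sum_l p_{il}\,\beta^{-1}_{lj}$ with $p_{il} = \langle \mathcal{P}_i\rangle_{\Phi(\rho_l)}$. Since each term of this sum is the outcome of the single configuration coupling the input state $\rho_l$ to the measurement $\mathcal{P}_i$, and distinct $l$ index distinct input states, the number of configurations $||\Gamma_{ij}||$ needed to reconstruct a given entry equals the number of indices $l$ for which $\beta^{-1}_{lj} \neq 0$, i.e. the number of nonzero entries in the $j$-th column of $\beta^{-1}$. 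In particular this count is independent of $i$, so that the extrema over $(i,j)$ reduce to extrema over the columns $j$.

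For the DPTM values \eqref{eq:nScalingDPTM} I would read the support of $\beta_D^{-1}$ directly off \eqref{eq:DPTMRecoMatrices}: the column $j=0$ contains the single nonzero diagonal entry $+1$, while every column $j \neq 0$ contains exactly two nonzero entries, the diagonal $+1$ and the off-diagonal $-1$ in the $l=0$ row. This gives $||\Gamma_{ij}||_{D} = 1$ for $j = 0$ and $||\Gamma_{ij}||_{D} = 2$ for $j \neq 0$, so that $\max_{i,j}||\Gamma_{ij}||_{D} = 2$ regardless of $n$.

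For the sQPT bounds I would exploit the Kronecker structure: as noted before the theorem, the $n$-qubit reconstruction matrix is $(\beta_Q)^{\otimes n}$ with inverse $(\beta_Q^{-1})^{\otimes n}$, where $\beta_Q^{-1}$ is the single-qubit matrix of \eqref{eq:BetaMatrixsQPT}. The crucial point is that the entries of a Kronecker product factorize, $\big[(\beta_Q^{-1})^{\otimes n}\big]_{(l_1\dots l_n),(j_1\dots j_n)} = \prod_{a=1}^n (\beta_Q^{-1})_{l_a j_a}$, so each entry is a single product of single-qubit entries; no summation, and hence no cancellation, occurs, and an entry is nonzero if and only if all of its single-qubit factors are nonzero. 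Therefore the number of nonzero entries in column $(j_1,\dots,j_n)$ equals the product over $a$ of the per-column nonzero counts of $\beta_Q^{-1}$. Reading \eqref{eq:BetaMatrixsQPT}, the single-qubit columns $j_a \in \{0,3\}$ carry two nonzero entries each while $j_a \in \{1,2\}$ carry three, so the column count is $\prod_{a=1}^n c(j_a)$ with $c(j_a) \in \{2,3\}$. Minimizing over the multi-index gives $2^n$ (all $j_a \in \{0,3\}$) and maximizing gives $3^n$ (all $j_a \in \{1,2\}$), establishing \eqref{eq:nMinScalingQPT} and \eqref{eq:nMaxScalingQPT}.

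Finally I would assemble the chain $\max_{i,j}||\Gamma_{ij}||_{D} = 2 \leq 2^n = \min_{i,j}||\Gamma_{ij}||_{Q} < 3^n = \max_{i,j}||\Gamma_{ij}||_{Q}$, where the first inequality is saturated precisely when $2^n = 2$, i.e. only at $n=1$, while $2^n < 3^n$ holds strictly for every $n \geq 1$. I expect the only genuinely delicate step to be the support-factorization argument for the Kronecker product: one must confirm that the multiplicative count is exact, which is immediate here because every entry of $(\beta_Q^{-1})^{\otimes n}$ is a bare product of single-qubit entries, so a vanishing factor is the only mechanism producing a zero and no cancellation between summed terms can arise. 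The remainder is direct bookkeeping on the two reconstruction matrices.
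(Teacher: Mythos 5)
Your proof is correct and takes essentially the same route as the paper's: the paper likewise obtains \eqref{eq:nScalingDPTM} directly from \eqref{eq:FinalExperimentalPTM} and derives \eqref{eq:nMinScalingQPT}--\eqref{eq:nMaxScalingQPT} by counting the minimum and maximum number of nonzero entries per column of $(\beta_Q^{-1})^{\otimes n}$. Your only addition is to spell out the Kronecker-factorization argument (entries are bare products, so no cancellation can occur) that the paper leaves implicit, which is a welcome but not substantively different elaboration.
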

\begin{proof}
\cref{eq:nScalingDPTM} is a direct consequence of \cref{eq:FinalExperimentalPTM}. On the other hand, the sQPT reconstruction for $n$-qubit yields
\begin{equation}
	\Gamma = M^Q(\beta^{-1}_Q)^{\otimes n} \ .
\end{equation} 
From \cref{eq:BetaMatrixsQPT}, \cref{eq:nMinScalingQPT} and \cref{eq:nMaxScalingQPT} follow by respectively counting the minimum and maximum number of non-zero entries in the $n$-fold Kronecker product $\beta^{-1}_Q$.
\end{proof}
We summarize the content of this theorem in \cref{fig:Scaling}.
\begin{figure}[t]
	\centering
	\includegraphics[width = 0.5 \textwidth]{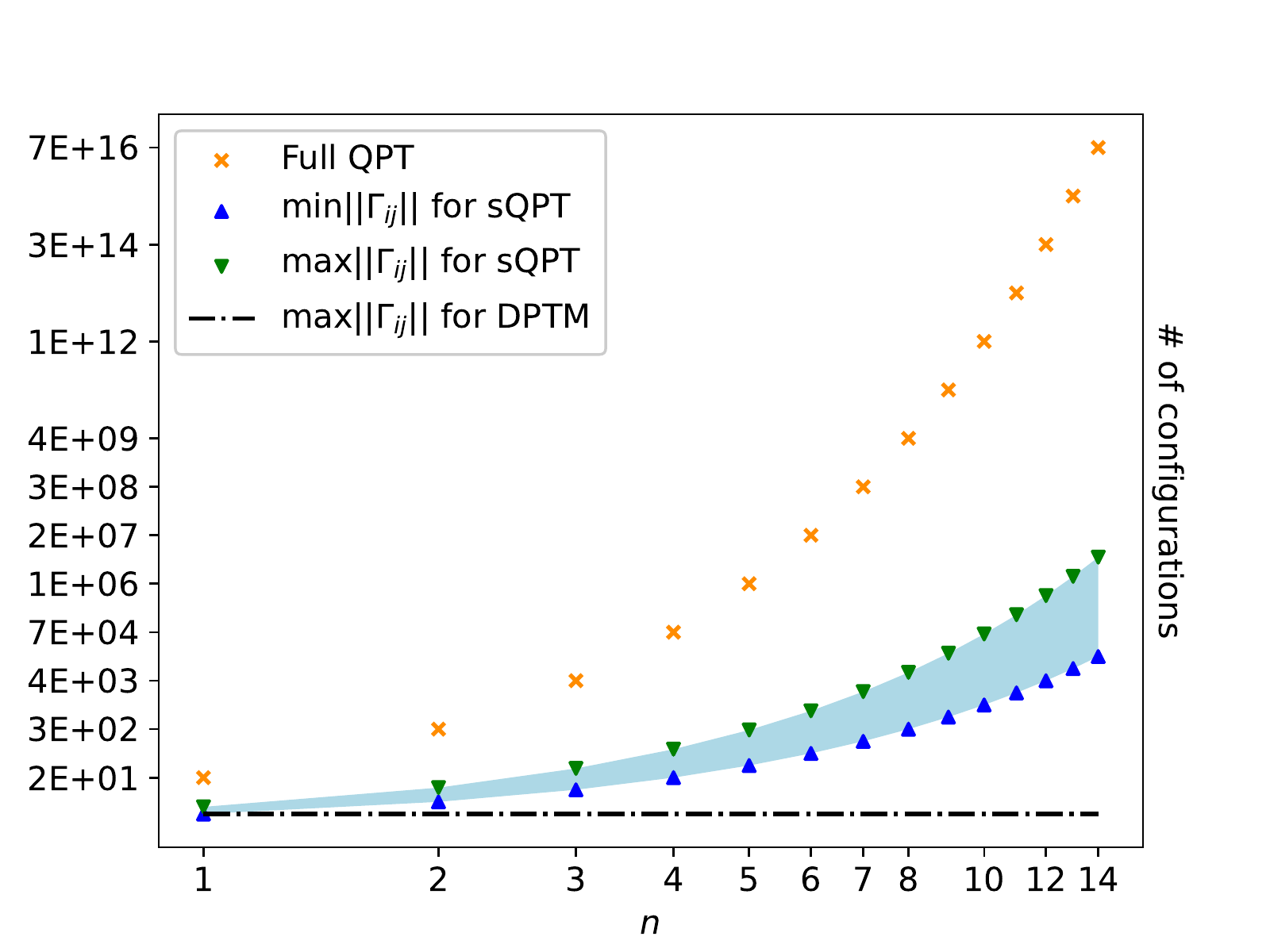}
	\caption{\label{fig:Scaling}Number of experimental configurations to reconstruct a single PTM entry, plotted with respect to the number of qubits $n$ ($log_2$ for both axes). As shown in \cref{th:Theorem1}, DPTM provides an exponential gain in terms of resources required by a single $\Gamma_{ij}$ reconstruction, improving the performance of sQPT $\forall n > 1$. For the latter, the shadowed area represents the possible configurations cost of a single $\Gamma_{ij}$ reconstruction (the exact value depends on the choice of $i$ and $j$). The results are all compared with the number of resources needed for a full process tomography (i.e. the reconstruction of the whole PTM), which requires $d^4$ configurations.}
\end{figure}
\begin{figure}[t]
	\centering
	\includegraphics[width = 0.5 \textwidth]{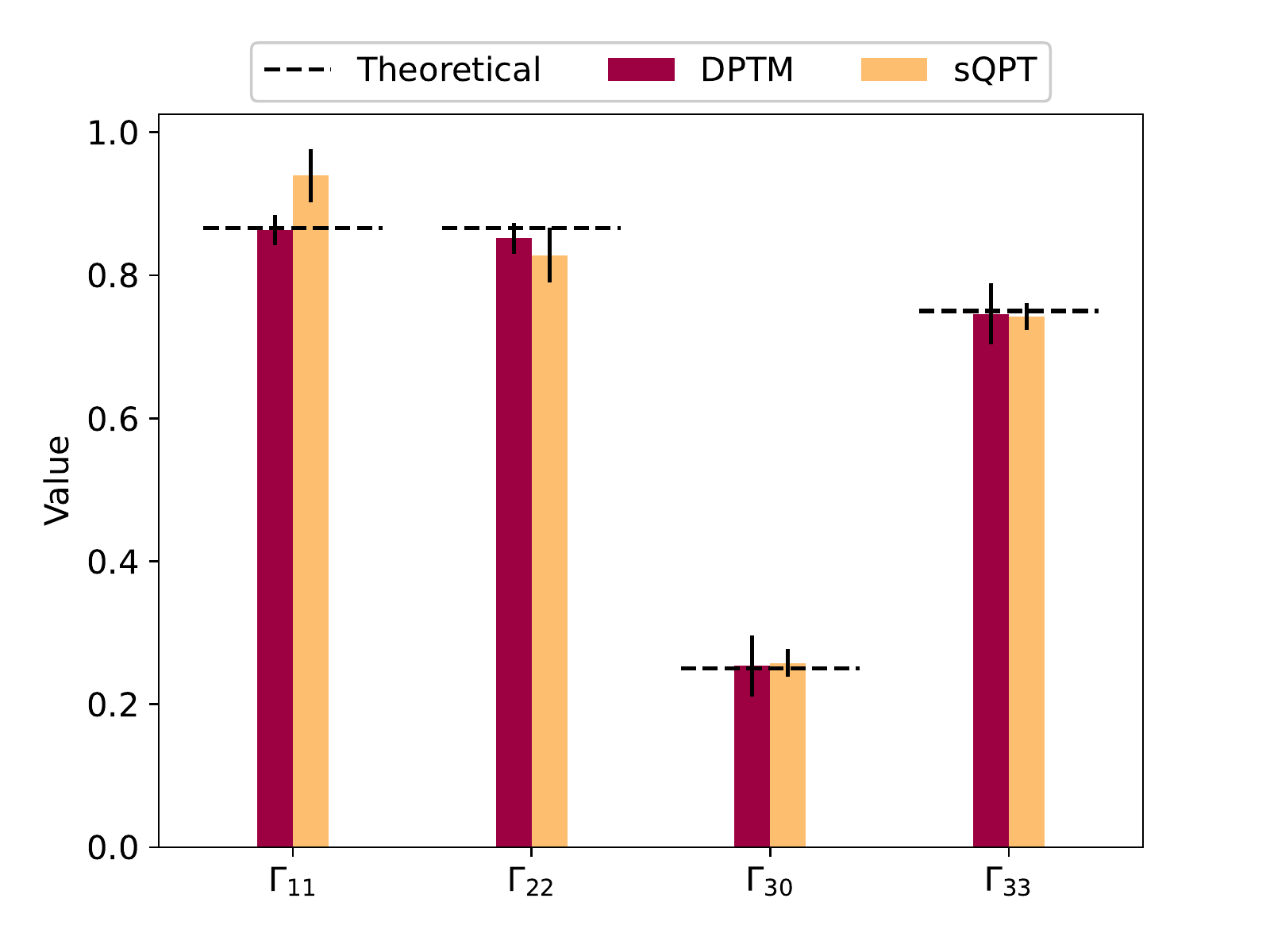}
	\caption{\label{fig:PTMComparisonQPT}Reconstruction of the non-trivial components of a single-qubit amplitude damping channel PTM. Simulated with Qiskit Aer, with transition probability $p = 0.25$. The results of DPTM and sQPT are represented respectively by the left and right bars, for each non-trivial $\Gamma_{ij}$. The dashed lines represent the theoretical values obtained from \cref{eq:PTMAmplitudeDamping}. This characterization costs $4$ experimental configurations to DPTM, and $8$ for sQPT. The number of shots for each experiment is 512. The error bars represent the standard error and are obtained by propagating the uncertainty from \cref{eq:1QDPTM} and \cref{eq:Gammai0sQPT,eq:Gammai1sQPT,eq:Gammai2sQPT,eq:Gammai3sQPT}).}
\end{figure}

\section{Applications}
In this section we provide some examples and simulations, by comparing DPTM and sQPT for a few channels.
\subsection{Amplitude damping characterization}
We consider a single-qubit amplitude damping channel with Kraus operators
\begin{equation}
	A_0 = \begin{pmatrix}
		1 & 0 \\
		0 & \sqrt{1-p}
	\end{pmatrix} \ 
	A_1 = \begin{pmatrix}
		0 & \sqrt{p} \\
		0 & 0 \
	\end{pmatrix} \ ,
	\label{eq:1QAmpDampKrausOperator}
\end{equation}
where $p$ describes the transition probability of the state $\ket{1}$ to $\ket{0}$, e.g. when the system emits a photon \citep{book:Nielsen}. The channel theoretical PTM reads
\begin{equation}
	\Gamma = \begin{pmatrix}
		1 & 0 & 0 & 0 \\
		0 & \sqrt{1-p} & 0 & 0 \\
		0 & 0 & \sqrt{1-p} & 0 \\
		p & 0 & 0 & 1 - p
	\end{pmatrix} \ .
	\label{eq:PTMAmplitudeDamping}
\end{equation}
The purpose of this section is to test both DPTM and sQPT in the characterization of the non-trivial components $\Gamma_{11}$, $\Gamma_{22}$, $\Gamma_{30}$ and $\Gamma_{33}$ (namely using \cref{eq:1QDPTM,eq:Gammai0sQPT,eq:Gammai1sQPT,eq:Gammai2sQPT,eq:Gammai3sQPT}).

In \cref{fig:PTMComparisonQPT} we plot the results of a simulation performed with Qiskit Aer for $p = 0.25$. All the results are compatible with the theoretical prediction, but DPTM reduces the cost in experimental configurations: only $4$ against the $8$ required by sQPT. We notice differences in terms of statistical uncertainties, which can be understood as follows: although sQPT combines more data into the same entries, potentially worsening the error propagation, DPTM uses mixed states, which can affect the statistics of the outcome, then leading to similar variances, in this specific example.

\subsection{Two-qubit correlated depolarizing channel}
We consider a two-qubit correlated depolarizing channel, where the amount of correlations is measured by a parameter $\mu \in [0,1]$ \citep{art:QUIT_CorrelatedNoise,art:QUIT_Pauli,art:QUIT_QuantumCapacity,art:QUIT_FullyCorrelatedDamping,art:QUIT_CorrelatedDamping}.   
We start from the class of correlated Pauli channels \citep{art:QUIT_CorrelatedNoise,art:QUIT_Pauli,art:QUIT_QuantumCapacity}, whose Kraus representation reads
\begin{equation}
	\Phi(\rho) = \sum_{\alpha_1, \alpha_2=0}^{3} p_{\alpha_1 \alpha_2} A_{\alpha_1 \alpha_2} \rho A_{\alpha_1 \alpha_2} \ ,
	\label{eq:Pauli-channel}
\end{equation}
where $A_{\alpha_1 \alpha_2} = \sigma_{\alpha_1} \otimes \sigma_{\alpha_2}$. The transition probabilities are given by the Markov chain $p_{\alpha_1 \alpha_2} = p_{\alpha_1}p_{\alpha_2|\alpha_1}$  \citep{art:QUIT_QuantumCapacity,art:Hamada}, with
\begin{align}	
	&p_{\alpha_{j}|\alpha_{i}} = (1-\mu)p_{\alpha_j} + \mu\delta_{\alpha_i \alpha_j} \, \\
	&\vec{p} = [1-p,p_x,p_y,p_z]^{T} \ . 
\end{align}
and $p = p_x + p_y + p_z$.\footnote{The case $\mu = 0$ represents a memoryless channel, that is when the qubits are completely uncorrelated, while $\mu = 1$ describes a full-memory channel, i.e. when the qubits exhibit complete correlation.} A correlated depolarizing channel is obtained by choosing $\vec{p} = [1-3p/4,p/4,p/4,p/4]^T$. 
\begin{figure}[t]
	\centering
	\subfloat[\label{fig:Gamma44}]{\includegraphics[width = 0.38 \textwidth]{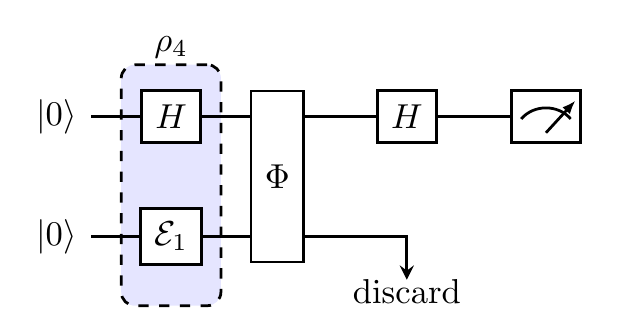}}\\[-2ex]
	\subfloat[\label{fig:Gamma66}]{\includegraphics[width = 0.425 \textwidth]{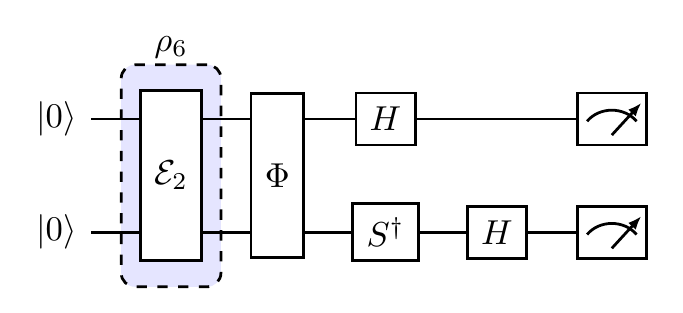}}
	\caption{\label{fig:DPTMGamma}Two-qubit circuits for the DPTM reconstruction of $\Gamma_{44}$ and $\Gamma_{66}$. (a) Circuit for $\Gamma_{44} = \langle \mathcal{P}_4 \rangle_{\Phi(\rho_4)}$, with $\mathcal{P}_4 = X \otimes I$ and $\rho_4 = [(\Unit_2 + X)\otimes \Unit_2]/4 = \ket{+}\!\bra{+}\otimes \Unit_2 /2$. The completely mixed state is prepared in the second qubit using the channel $\mathcal{E}_1(\psi) = \psi/2 + X\psi X/2$, with $\psi = \ket{0}\!\bra{0}$. (b) Circuit for $\Gamma_{66} = \langle \mathcal{P}_6 \rangle_{\Phi(\rho_6)}$, with $\mathcal{P}_6 = X \otimes Y$ and $\rho_6 = (\Unit + X\otimes Y)/4$. The input state is prepared from $\psi = \ket{0}\!\bra{0}$ using the channel $\mathcal{E}_2(\psi) = U\psi U^\dagger/2 + V U \psi U^\dagger V^\dagger/2$, with $U = \text{CNOT}(S \otimes \Unit)(H \otimes \Unit)$ and $V = (\Unit \otimes Z)(\Unit \otimes X)$.}
\end{figure}

In this section we compare DPTM and sQPT, for the case in which we only require the extraction of the parameters $p$ and $\mu$ from a set of tomographic configurations. To this extent, we compute the theoretical PTM, which yields the following relations
\begin{align}
	\Gamma_{44} & =  1 - p \ , \\
	\Gamma_{66} & = (1-p)(\mu p - p + 1) \ .
\end{align}
Their inverses allow the parameter extraction as $p \leftarrow \Gamma_{44}$ and $\mu \leftarrow \Gamma_{44}$, $\Gamma_{66}$. On one hand, these components are provided by sQPT from $15$ experimental configurations $\{M_{ij}^Q\}$, where the observables are those forming the Pauli basis and the states come from all the possible $2$-fold Kronecker products of the single-qubit set $\{\rho_i^Q\}$, with the reconstruction matrix $(\beta_Q^{-1})^{\otimes 2}$. This gives the results as
\begin{equation}
\begin{split}
	\Gamma_{44} = &- \frac{1}{4}M^Q_{40} - \frac{1}{4}M^Q_{43} + \frac{1}{2}M^Q_{44} + \\
	&+ \frac{1}{2}M^{Q}_{47} - \frac{1}{4}M^Q_{4(12)} - \frac{1}{4}M^Q_{4(15)} \ ,
\end{split}
\end{equation}
along with 
\begin{equation}
\begin{split}
	\Gamma_{66} = \frac{1}{4}M^Q_{60} - \frac{1}{2}M^Q_{62} + \frac{1}{4}M^Q_{63} - \frac{1}{2}M^Q_{64} + M^Q_{66} + \\
	 - \frac{1}{2}M^{Q}_{67} + \frac{1}{4}M^Q_{6(12)} - \frac{1}{2}M^Q_{6(14)} + \frac{1}{4}M^Q_{6(15)} \ .
\end{split}
\end{equation} 
On the other hand, DPTM provides the same components but using only $2$ configurations
\begin{equation}
	\Gamma_{44} = M^D_{44} \ , \quad \Gamma_{66} = M^D_{66} \ ,
\end{equation}
with observables in the Pauli basis and input states given by \cref{eq:ChState}. The implementation of these measurements is reported in \cref{fig:DPTMGamma}.

By choosing $p = 0.25$ and $\mu = 0.75$ so that $\Gamma_{44} = 0.750$ and $\Gamma_{66} \simeq 0.703$, we simulate DPTM and sQPT. With $2048$ shots and using Qiskit Aer, we obtain
\begin{equation}
	\text{DPTM} \rightarrow
	\begin{cases}
		\Gamma_{44} =& 0.749 \pm 0.015  \\ 
		\Gamma_{66} =& 0.710 \pm 0.016 
	\end{cases} \ , 
\end{equation}
and 
\begin{equation}
	\text{sQPT} \rightarrow
	\begin{cases}
		\Gamma_{44} =& 0.756 \pm 0.015 \\ 
		\Gamma_{66} =& 0.705 \pm 0.016
	\end{cases} \ .
\end{equation}
We notice that both methods are in agreement with the theoretical prediction. In this regard, DPTM performs better than sQPT: requiring fewer experimental configurations under the same number of shots and with compatible errors. Importantly, reducing the number of required settings will typically reduce the systematics due to hardware errors (which are not considered in this simulation).
 
\section{Conclusions}
In this work we applied QPT to the reconstruction of a multiqubit quantum channel PTM. In general, QPT performs a set of measurements on different experimental configurations (i.e. by changing the input state and/or the observable at the output of the channel), combining them into each PTM entry at the post-processing stage.

We presented an alternative technique that provides a direct reconstruction of the PTM from the measurement outcomes. In principle, our approach differs from sQPT only in the choice of the input states. However, this choice simplifies both the experimental implementation and the post-processing reconstruction: DPTM exponentially reduces the number of different configurations that combines into a single PTM entry, while keeping the same number of shots of sQPT. 

Though both techniques require $d^4$ configurations for a full tomography of the channel, DPTM truly shines when only a subset of the PTM has to be reconstructed (e.g. in the extraction of the channel parameters under a given theoretical model, or for biased characterizations of unknown channels, for example of the Pauli type). While not improving the statistics of the results, DPTM requires (at most) $2$ experimental configurations for each PTM entry, independently of the dimension of the system: this allows for more efficient (and scalable) experimental implementations of tomographic protocols, with fewer computational circuits or setups of the optical table.

\section*{Acknowledgments}
This work received support from MIUR Dipartimenti di Eccellenza 2018-2022, Project No. F11I18000680001, from EU H2020 QuantERA ERA-NET Cofund in Quantum Technologies, Quantum Information and Communication with High-dimensional Encoding (QuICHE), Grant Agreement 731473 and 101017733, from the U.S. Department of Energy, Office of Science, National Quantum Information Science Research Centers, Superconducting Quantum Materials and Systems Center (SQMS), Contract No. DE-AC02-07CH11359. L.M. acknowledges support from the PNRR MUR Project PE0000023-NQSTI. C.M. acknowledges support from the National Research Centre for HPC, Big Data and Quantum Computing, PNRR MUR Project CN0000013-ICSC.

\appendix*
\section{MIXED STATE PREPARATION\label{sec:Appendix}}
In this section we address the problem of generating the set of DPTM input states of \cref{eq:ChState}. Indeed, this discussion applies to any set of input, possibly mixed, states.
 
For single-qubit channels, $\rho_0$ in \cref{eq:SingleQDPTMMixed} is the only mixed state of the set. Its implementation can be achieved either through an ancilla-assisted approach: namely by preparing the system in the maximally entangled state
\begin{equation}
	\ket{\Psi_0} = \frac{1}{\sqrt{2}}\sum_{k=0}^{1}\ket{k}\otimes\ket{k}_a \ ,
\end{equation}
and tracing out the ancillary qubit
\begin{equation}
	\rho_0 = \Tr_{a}\big[\ket{\Psi_0}\!\bra{\Psi_0}\big] \ ,
\end{equation}
or by classical random generation of its pure components
\begin{equation}
	\ket{\psi_0} \in \{ \ket{0}, \ket{1} \} \quad \text{with } p(\ket{0}) = p(\ket{1}) = \frac{1}{2} \ .
\end{equation}
In principle, this last approach is equivalent to preparing the initial state $\psi = \ket{0}\!\bra{0}$ and then applying a quantum channel 
\begin{equation}
 	\mathcal{E}_0(\psi) = \frac{1}{2}\psi + \frac{1}{2}X\psi X \ ,
\end{equation}
so that $\rho_0 = \mathcal{E}_0(\psi)$.

Starting from this last consideration, we now discuss the multiqubit generation of a set of possibly mixed states $\{\rho_k\}$ (e.g. the DPTM input states of \cref{eq:ChState}). Consider $\{\psi_k\}$ a set of $d^2$, linearly independent, \emph{pure} states. The requirement of purity allows to start from states that can be easily and procedurally generated. Let $\mathcal{E}$ be the channel that maps each initial pure state to an element of the mixed set, i.e. $\rho_k = \mathcal{E}(\psi_k) \ \forall k$. 
Indeed, we can split the preparation of a set of mixed states to that of a set of pure states and the simulation of a  quantum channel.

To determine $\mathcal{E}$ from $\{\rho_k\}$ and $\{\psi_k\}$, we move again to the vectorized notation, which yields
\begin{equation}
	\rho_k = \Lambda \psi_k \ , 
\end{equation}
with $\Lambda$ the PTM of $\mathcal{E}$. This problem is a quantum process tomography task, in which each combination of input-output states is already known. In this case, \cref{eq:QPTforPTM} reads
\begin{equation}
	\Lambda = \pi B^{-1} \ ,
	\label{eq:MixedPreparationwithTomography}
\end{equation}
where $E_i = \mathcal{P}_i$ and
\begin{align}
	B_{ij} &= \Tr[\mathcal{P}_i \psi_j] \ , \\
	\pi_{ij} &= \Tr[\mathcal{P}_i\mathcal{E}(\psi_j)] = \Tr[\mathcal{P}_i \rho_j] \ .
\end{align}
Since both $\{\rho_k\}$ and $\{\psi_k\}$ are chosen, we can theoretically compute $\Lambda$, and then procedurally simulate it using circuits methods like \texttt{PTM.to\_instruction()} from the Qiskit package.
\begin{figure}[t]
	\centering
	\includegraphics[width = 0.38 \textwidth]{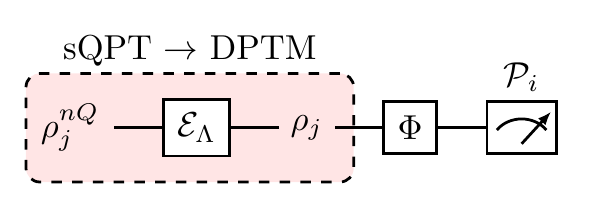}
	\caption{\label{fig:sQPTtoDPTM}Conversion between sQPT and DPTM, in the identification of an unknown quantum channel $\Phi$. The circuit corresponds to the configuration $\langle \mathcal{P}_i\rangle_{\Phi(\rho_j)}$, with the system initially prepared in the sQPT multiqubit input states $\{\rho_j^{nQ}\}$ (obtained by taking all the possible $n$-fold Kronecker products of ${\rho_i^Q}$). The conversion is implemented through the channel $\mathcal{E}$, with PTM $\Lambda$ given by \cref{eq:MixedPreparationwithTomography}.}
\end{figure} 

When the states involved are those of DPTM and sQPT, the channel $\mathcal{E}$ provides a map between these two tomographic reconstructions, with $\pi$ and $B$ respectively given by $\beta_D$ and $(\beta_Q)^{\otimes n}$. We summarize this protocol in \cref{fig:sQPTtoDPTM}.

Returning to the original example, we consider $n=1$. The channel that prepares the DPTM states $\{\rho_k\}$ (see \cref{eq:SingleQDPTMMixed,eq:SingleQDPTMMixed1,eq:SingleQDPTMMixed2,eq:SingleQDPTMMixed3}) from those of sQPT $\{\rho_k^Q\}$ (see \cref{eq:SingleQubitsQPTInputStates}) reads
\begin{equation}
	\Lambda = \begin{pmatrix}
		1 & 0 & 0 & 0 \\
		0 & 1 & 0 & 0 \\
		0 & 0 & 1 & 0 \\
		\frac{1}{2} & -\frac{1}{2} & -\frac{1}{2} & \frac{1}{2}
	\end{pmatrix} \ ,
\end{equation}
which is almost trivial, since for $n=1$ the states $\{\rho_k\}$ and $\{\rho^Q_k\}$ differ only for $k=0$.

\bibliography{refs.bib}

\begin{thebibliography}{28}%
\makeatletter
\providecommand \@ifxundefined [1]{%
 \@ifx{#1\undefined}
}%
\providecommand \@ifnum [1]{%
 \ifnum #1\expandafter \@firstoftwo
 \else \expandafter \@secondoftwo
 \fi
}%
\providecommand \@ifx [1]{%
 \ifx #1\expandafter \@firstoftwo
 \else \expandafter \@secondoftwo
 \fi
}%
\providecommand \natexlab [1]{#1}%
\providecommand \enquote  [1]{``#1''}%
\providecommand \bibnamefont  [1]{#1}%
\providecommand \bibfnamefont [1]{#1}%
\providecommand \citenamefont [1]{#1}%
\providecommand \href@noop [0]{\@secondoftwo}%
\providecommand \href [0]{\begingroup \@sanitize@url \@href}%
\providecommand \@href[1]{\@@startlink{#1}\@@href}%
\providecommand \@@href[1]{\endgroup#1\@@endlink}%
\providecommand \@sanitize@url [0]{\catcode `\\12\catcode `\$12\catcode
  `\&12\catcode `\#12\catcode `\^12\catcode `\_12\catcode `\%12\relax}%
\providecommand \@@startlink[1]{}%
\providecommand \@@endlink[0]{}%
\providecommand \url  [0]{\begingroup\@sanitize@url \@url }%
\providecommand \@url [1]{\endgroup\@href {#1}{\urlprefix }}%
\providecommand \urlprefix  [0]{URL }%
\providecommand \Eprint [0]{\href }%
\providecommand \doibase [0]{https://doi.org/}%
\providecommand \selectlanguage [0]{\@gobble}%
\providecommand \bibinfo  [0]{\@secondoftwo}%
\providecommand \bibfield  [0]{\@secondoftwo}%
\providecommand \translation [1]{[#1]}%
\providecommand \BibitemOpen [0]{}%
\providecommand \bibitemStop [0]{}%
\providecommand \bibitemNoStop [0]{.\EOS\space}%
\providecommand \EOS [0]{\spacefactor3000\relax}%
\providecommand \BibitemShut  [1]{\csname bibitem#1\endcsname}%
\let\auto@bib@innerbib\@empty
\bibitem [{\citenamefont {Nielsen}\ and\ \citenamefont
  {Chuang}(2010)}]{book:Nielsen}%
  \BibitemOpen
  \bibfield  {author} {\bibinfo {author} {\bibfnamefont {M.~A.}\ \bibnamefont
  {Nielsen}}\ and\ \bibinfo {author} {\bibfnamefont {I.~L.}\ \bibnamefont
  {Chuang}},\ }\href {https://doi.org/10.1017/CBO9780511976667} {\emph
  {\bibinfo {title} {Quantum Computation and Quantum Information: 10th
  Anniversary Edition}}}\ (\bibinfo  {publisher} {Cambridge University Press},\
  \bibinfo {year} {2010})\BibitemShut {NoStop}%
\bibitem [{\citenamefont {Bennett}\ \emph {et~al.}(1993)\citenamefont
  {Bennett}, \citenamefont {Brassard}, \citenamefont {Cr\'epeau}, \citenamefont
  {Jozsa}, \citenamefont {Peres},\ and\ \citenamefont
  {Wootters}}]{art:Bennett}%
  \BibitemOpen
  \bibfield  {author} {\bibinfo {author} {\bibfnamefont {C.~H.}\ \bibnamefont
  {Bennett}}, \bibinfo {author} {\bibfnamefont {G.}~\bibnamefont {Brassard}},
  \bibinfo {author} {\bibfnamefont {C.}~\bibnamefont {Cr\'epeau}}, \bibinfo
  {author} {\bibfnamefont {R.}~\bibnamefont {Jozsa}}, \bibinfo {author}
  {\bibfnamefont {A.}~\bibnamefont {Peres}},\ and\ \bibinfo {author}
  {\bibfnamefont {W.~K.}\ \bibnamefont {Wootters}},\ }\bibfield  {title}
  {\bibinfo {title} {Teleporting an unknown quantum state via dual classical
  and {E}instein-{P}odolsky-{R}osen channels},\ }\href
  {https://doi.org/10.1103/PhysRevLett.70.1895} {\bibfield  {journal} {\bibinfo
   {journal} {Phys. Rev. Lett.}\ }\textbf {\bibinfo {volume} {70}},\ \bibinfo
  {pages} {1895} (\bibinfo {year} {1993})}\BibitemShut {NoStop}%
\bibitem [{\citenamefont {Macchiavello}\ and\ \citenamefont
  {Palma}(2002)}]{art:QUIT_CorrelatedNoise}%
  \BibitemOpen
  \bibfield  {author} {\bibinfo {author} {\bibfnamefont {C.}~\bibnamefont
  {Macchiavello}}\ and\ \bibinfo {author} {\bibfnamefont {G.~M.}\ \bibnamefont
  {Palma}},\ }\bibfield  {title} {\bibinfo {title} {Entanglement-enhanced
  information transmission over a quantum channel with correlated noise},\
  }\href {https://doi.org/10.1103/physreva.65.050301} {\bibfield  {journal}
  {\bibinfo  {journal} {Phys. Rev. A}\ }\textbf {\bibinfo {volume} {65}},\
  \bibinfo {pages} {050301(R)} (\bibinfo {year} {2002})}\BibitemShut {NoStop}%
\bibitem [{\citenamefont {Macchiavello}\ and\ \citenamefont
  {Sacchi}(2016)}]{art:QUIT_QuantumCapacity}%
  \BibitemOpen
  \bibfield  {author} {\bibinfo {author} {\bibfnamefont {C.}~\bibnamefont
  {Macchiavello}}\ and\ \bibinfo {author} {\bibfnamefont {M.~F.}\ \bibnamefont
  {Sacchi}},\ }\bibfield  {title} {\bibinfo {title} {Witnessing quantum
  capacities of correlated channels},\ }\href
  {https://doi.org/10.1103/physreva.94.052333} {\bibfield  {journal} {\bibinfo
  {journal} {Phys. Rev. A}\ }\textbf {\bibinfo {volume} {94}},\ \bibinfo
  {pages} {052333} (\bibinfo {year} {2016})}\BibitemShut {NoStop}%
\bibitem [{\citenamefont {Mohseni}\ \emph {et~al.}(2008)\citenamefont
  {Mohseni}, \citenamefont {Rezakhani},\ and\ \citenamefont
  {Lidar}}]{art:Mohseni}%
  \BibitemOpen
  \bibfield  {author} {\bibinfo {author} {\bibfnamefont {M.}~\bibnamefont
  {Mohseni}}, \bibinfo {author} {\bibfnamefont {A.~T.}\ \bibnamefont
  {Rezakhani}},\ and\ \bibinfo {author} {\bibfnamefont {D.~A.}\ \bibnamefont
  {Lidar}},\ }\bibfield  {title} {\bibinfo {title} {Quantum-process tomography:
  Resource analysis of different strategies},\ }\href
  {https://doi.org/10.1103/physreva.77.032322} {\bibfield  {journal} {\bibinfo
  {journal} {Phys. Rev. A}\ }\textbf {\bibinfo {volume} {77}},\ \bibinfo
  {pages} {032322} (\bibinfo {year} {2008})}\BibitemShut {NoStop}%
\bibitem [{\citenamefont {Greenbaum}(2015)}]{art:Greenbaum}%
  \BibitemOpen
  \bibfield  {author} {\bibinfo {author} {\bibfnamefont {D.}~\bibnamefont
  {Greenbaum}},\ }\href@noop {} {\bibinfo {title} {Introduction to quantum gate
  set tomography}} (\bibinfo {year} {2015}),\ \Eprint
  {https://arxiv.org/abs/1509.02921} {arXiv:1509.02921 [quant-ph]} \BibitemShut
  {NoStop}%
\bibitem [{\citenamefont {Nielsen}\ \emph {et~al.}(2021)\citenamefont
  {Nielsen}, \citenamefont {Gamble}, \citenamefont {Rudinger}, \citenamefont
  {Scholten}, \citenamefont {Young},\ and\ \citenamefont
  {Blume-Kohout}}]{art:Nielsen_GateTomography}%
  \BibitemOpen
  \bibfield  {author} {\bibinfo {author} {\bibfnamefont {E.}~\bibnamefont
  {Nielsen}}, \bibinfo {author} {\bibfnamefont {J.~K.}\ \bibnamefont {Gamble}},
  \bibinfo {author} {\bibfnamefont {K.}~\bibnamefont {Rudinger}}, \bibinfo
  {author} {\bibfnamefont {T.}~\bibnamefont {Scholten}}, \bibinfo {author}
  {\bibfnamefont {K.}~\bibnamefont {Young}},\ and\ \bibinfo {author}
  {\bibfnamefont {R.}~\bibnamefont {Blume-Kohout}},\ }\bibfield  {title}
  {\bibinfo {title} {Gate set tomography},\ }\href
  {https://doi.org/10.22331/q-2021-10-05-557} {\bibfield  {journal} {\bibinfo
  {journal} {Quantum}\ }\textbf {\bibinfo {volume} {5}},\ \bibinfo {pages}
  {557} (\bibinfo {year} {2021})}\BibitemShut {NoStop}%
\bibitem [{\citenamefont {Bendersky}\ \emph {et~al.}(2009)\citenamefont
  {Bendersky}, \citenamefont {Pastawski},\ and\ \citenamefont
  {Paz}}]{art:Bendersky_SEQPT}%
  \BibitemOpen
  \bibfield  {author} {\bibinfo {author} {\bibfnamefont {A.}~\bibnamefont
  {Bendersky}}, \bibinfo {author} {\bibfnamefont {F.}~\bibnamefont
  {Pastawski}},\ and\ \bibinfo {author} {\bibfnamefont {J.~P.}\ \bibnamefont
  {Paz}},\ }\bibfield  {title} {\bibinfo {title} {Selective and efficient
  quantum process tomography},\ }\href
  {https://doi.org/10.1103/PhysRevA.80.032116} {\bibfield  {journal} {\bibinfo
  {journal} {Phys. Rev. A}\ }\textbf {\bibinfo {volume} {80}},\ \bibinfo
  {pages} {032116} (\bibinfo {year} {2009})}\BibitemShut {NoStop}%
\bibitem [{\citenamefont {Gaikwad}\ \emph {et~al.}(2018)\citenamefont
  {Gaikwad}, \citenamefont {Rehal}, \citenamefont {Singh}, \citenamefont
  {Arvind},\ and\ \citenamefont {Dorai}}]{art:Gaikwad_SQPT}%
  \BibitemOpen
  \bibfield  {author} {\bibinfo {author} {\bibfnamefont {A.}~\bibnamefont
  {Gaikwad}}, \bibinfo {author} {\bibfnamefont {D.}~\bibnamefont {Rehal}},
  \bibinfo {author} {\bibfnamefont {A.}~\bibnamefont {Singh}}, \bibinfo
  {author} {\bibnamefont {Arvind}},\ and\ \bibinfo {author} {\bibfnamefont
  {K.}~\bibnamefont {Dorai}},\ }\bibfield  {title} {\bibinfo {title}
  {Experimental demonstration of selective quantum process tomography on an
  {NMR} quantum information processor},\ }\href
  {https://doi.org/10.1103/PhysRevA.97.022311} {\bibfield  {journal} {\bibinfo
  {journal} {Phys. Rev. A}\ }\textbf {\bibinfo {volume} {97}},\ \bibinfo
  {pages} {022311} (\bibinfo {year} {2018})}\BibitemShut {NoStop}%
\bibitem [{\citenamefont {Gaikwad}\ \emph {et~al.}(2022)\citenamefont
  {Gaikwad}, \citenamefont {Shende}, \citenamefont {Arvind},\ and\
  \citenamefont {Dorai}}]{art:Gaikwad_MSQPT}%
  \BibitemOpen
  \bibfield  {author} {\bibinfo {author} {\bibfnamefont {A.}~\bibnamefont
  {Gaikwad}}, \bibinfo {author} {\bibfnamefont {K.}~\bibnamefont {Shende}},
  \bibinfo {author} {\bibnamefont {Arvind}},\ and\ \bibinfo {author}
  {\bibfnamefont {K.}~\bibnamefont {Dorai}},\ }\bibfield  {title} {\bibinfo
  {title} {Implementing efficient selective quantum process tomography of
  superconducting quantum gates on {IBM} quantum experience},\ }\href
  {https://doi.org/10.1038/s41598-022-07721-3} {\bibfield  {journal} {\bibinfo
  {journal} {Sci. Rep.}\ }\textbf {\bibinfo {volume} {12}},\ \bibinfo {pages}
  {3688} (\bibinfo {year} {2022})}\BibitemShut {NoStop}%
\bibitem [{\citenamefont {Mohseni}\ and\ \citenamefont
  {Lidar}(2006)}]{art:Mohseni_DCQD}%
  \BibitemOpen
  \bibfield  {author} {\bibinfo {author} {\bibfnamefont {M.}~\bibnamefont
  {Mohseni}}\ and\ \bibinfo {author} {\bibfnamefont {D.~A.}\ \bibnamefont
  {Lidar}},\ }\bibfield  {title} {\bibinfo {title} {Direct characterization of
  quantum dynamics},\ }\href {https://doi.org/10.1103/PhysRevLett.97.170501}
  {\bibfield  {journal} {\bibinfo  {journal} {Phys. Rev. Lett.}\ }\textbf
  {\bibinfo {volume} {97}},\ \bibinfo {pages} {170501} (\bibinfo {year}
  {2006})}\BibitemShut {NoStop}%
\bibitem [{\citenamefont {Chuang}\ and\ \citenamefont
  {Nielsen}(1997)}]{art:Chuang_Tomography}%
  \BibitemOpen
  \bibfield  {author} {\bibinfo {author} {\bibfnamefont {I.~L.}\ \bibnamefont
  {Chuang}}\ and\ \bibinfo {author} {\bibfnamefont {M.~A.}\ \bibnamefont
  {Nielsen}},\ }\bibfield  {title} {\bibinfo {title} {Prescription for
  experimental determination of the dynamics of a quantum black box},\ }\href
  {https://doi.org/10.1080/09500349708231894} {\bibfield  {journal} {\bibinfo
  {journal} {J. Mod. Opt.}\ }\textbf {\bibinfo {volume} {44}},\ \bibinfo
  {pages} {2455} (\bibinfo {year} {1997})}\BibitemShut {NoStop}%
\bibitem [{\citenamefont {Altepeter}\ \emph {et~al.}(2003)\citenamefont
  {Altepeter}, \citenamefont {Branning}, \citenamefont {Jeffrey}, \citenamefont
  {Wei}, \citenamefont {Kwiat}, \citenamefont {Thew}, \citenamefont {O'Brien},
  \citenamefont {Nielsen},\ and\ \citenamefont {White}}]{art:Altepeter_AAQPT}%
  \BibitemOpen
  \bibfield  {author} {\bibinfo {author} {\bibfnamefont {J.~B.}\ \bibnamefont
  {Altepeter}}, \bibinfo {author} {\bibfnamefont {D.}~\bibnamefont {Branning}},
  \bibinfo {author} {\bibfnamefont {E.}~\bibnamefont {Jeffrey}}, \bibinfo
  {author} {\bibfnamefont {T.~C.}\ \bibnamefont {Wei}}, \bibinfo {author}
  {\bibfnamefont {P.~G.}\ \bibnamefont {Kwiat}}, \bibinfo {author}
  {\bibfnamefont {R.~T.}\ \bibnamefont {Thew}}, \bibinfo {author}
  {\bibfnamefont {J.~L.}\ \bibnamefont {O'Brien}}, \bibinfo {author}
  {\bibfnamefont {M.~A.}\ \bibnamefont {Nielsen}},\ and\ \bibinfo {author}
  {\bibfnamefont {A.~G.}\ \bibnamefont {White}},\ }\bibfield  {title} {\bibinfo
  {title} {Ancilla-assisted quantum process tomography},\ }\href
  {https://doi.org/10.1103/PhysRevLett.90.193601} {\bibfield  {journal}
  {\bibinfo  {journal} {Phys. Rev. Lett.}\ }\textbf {\bibinfo {volume} {90}},\
  \bibinfo {pages} {193601} (\bibinfo {year} {2003})}\BibitemShut {NoStop}%
\bibitem [{\citenamefont {Bongioanni}\ \emph {et~al.}(2010)\citenamefont
  {Bongioanni}, \citenamefont {Sansoni}, \citenamefont {Sciarrino},
  \citenamefont {Vallone},\ and\ \citenamefont {Mataloni}}]{art:Mataloni}%
  \BibitemOpen
  \bibfield  {author} {\bibinfo {author} {\bibfnamefont {I.}~\bibnamefont
  {Bongioanni}}, \bibinfo {author} {\bibfnamefont {L.}~\bibnamefont {Sansoni}},
  \bibinfo {author} {\bibfnamefont {F.}~\bibnamefont {Sciarrino}}, \bibinfo
  {author} {\bibfnamefont {G.}~\bibnamefont {Vallone}},\ and\ \bibinfo {author}
  {\bibfnamefont {P.}~\bibnamefont {Mataloni}},\ }\bibfield  {title} {\bibinfo
  {title} {Experimental quantum process tomography of non-trace-preserving
  maps},\ }\href {https://doi.org/10.1103/PhysRevA.82.042307} {\bibfield
  {journal} {\bibinfo  {journal} {Phys. Rev. A}\ }\textbf {\bibinfo {volume}
  {82}},\ \bibinfo {pages} {042307} (\bibinfo {year} {2010})}\BibitemShut
  {NoStop}%
\bibitem [{\citenamefont {Mangini}\ \emph {et~al.}(2022)\citenamefont
  {Mangini}, \citenamefont {Maccone},\ and\ \citenamefont
  {Macchiavello}}]{art:QUIT_NoiseDeconvolution}%
  \BibitemOpen
  \bibfield  {author} {\bibinfo {author} {\bibfnamefont {S.}~\bibnamefont
  {Mangini}}, \bibinfo {author} {\bibfnamefont {L.}~\bibnamefont {Maccone}},\
  and\ \bibinfo {author} {\bibfnamefont {C.}~\bibnamefont {Macchiavello}},\
  }\bibfield  {title} {\bibinfo {title} {Qubit noise deconvolution},\ }\href
  {https://doi.org/10.1140/epjqt/s40507-022-00151-0} {\bibfield  {journal}
  {\bibinfo  {journal} {{EPJ} Quantum Technol.}\ }\textbf {\bibinfo {volume}
  {9}},\ \bibinfo {pages} {29} (\bibinfo {year} {2022})}\BibitemShut {NoStop}%
\bibitem [{\citenamefont {Roncallo}\ \emph {et~al.}(2023)\citenamefont
  {Roncallo}, \citenamefont {Maccone},\ and\ \citenamefont
  {Macchiavello}}]{art:QUIT_MultiQubitNoiseDeconvolution}%
  \BibitemOpen
  \bibfield  {author} {\bibinfo {author} {\bibfnamefont {S.}~\bibnamefont
  {Roncallo}}, \bibinfo {author} {\bibfnamefont {L.}~\bibnamefont {Maccone}},\
  and\ \bibinfo {author} {\bibfnamefont {C.}~\bibnamefont {Macchiavello}},\
  }\bibfield  {title} {\bibinfo {title} {Multiqubit noise deconvolution and
  characterization},\ }\href {https://doi.org/10.1103/PhysRevA.107.022419}
  {\bibfield  {journal} {\bibinfo  {journal} {Phys. Rev. A}\ }\textbf {\bibinfo
  {volume} {107}},\ \bibinfo {pages} {022419} (\bibinfo {year}
  {2023})}\BibitemShut {NoStop}%
\bibitem [{\citenamefont {Giovannetti}\ \emph {et~al.}(2011)\citenamefont
  {Giovannetti}, \citenamefont {Lloyd},\ and\ \citenamefont
  {Maccone}}]{art:Giovannetti_Advances}%
  \BibitemOpen
  \bibfield  {author} {\bibinfo {author} {\bibfnamefont {V.}~\bibnamefont
  {Giovannetti}}, \bibinfo {author} {\bibfnamefont {S.}~\bibnamefont {Lloyd}},\
  and\ \bibinfo {author} {\bibfnamefont {L.}~\bibnamefont {Maccone}},\
  }\bibfield  {title} {\bibinfo {title} {Advances in quantum metrology},\
  }\href {https://doi.org/10.1038/nphoton.2011.35} {\bibfield  {journal}
  {\bibinfo  {journal} {Nat. Photon.}\ }\textbf {\bibinfo {volume} {5}},\
  \bibinfo {pages} {222} (\bibinfo {year} {2011})}\BibitemShut {NoStop}%
\bibitem [{\citenamefont {Aaronson}(2018)}]{art:Aaronson}%
  \BibitemOpen
  \bibfield  {author} {\bibinfo {author} {\bibfnamefont {S.}~\bibnamefont
  {Aaronson}},\ }\bibfield  {title} {\bibinfo {title} {Shadow tomography of
  quantum states},\ }in\ \href {https://doi.org/10.1145/3188745.3188802} {\emph
  {\bibinfo {booktitle} {Proceedings of the 50th Annual ACM SIGACT Symposium on
  Theory of Computing}}}\ (\bibinfo {year} {2018})\ p.\ \bibinfo {pages}
  {325–338}\BibitemShut {NoStop}%
\bibitem [{\citenamefont {Huang}\ \emph {et~al.}(2020)\citenamefont {Huang},
  \citenamefont {Kueng},\ and\ \citenamefont {Preskill}}]{art:Huang-Preskill}%
  \BibitemOpen
  \bibfield  {author} {\bibinfo {author} {\bibfnamefont {H.-Y.}\ \bibnamefont
  {Huang}}, \bibinfo {author} {\bibfnamefont {R.}~\bibnamefont {Kueng}},\ and\
  \bibinfo {author} {\bibfnamefont {J.}~\bibnamefont {Preskill}},\ }\bibfield
  {title} {\bibinfo {title} {Predicting many properties of a quantum system
  from very few measurements},\ }\href
  {https://doi.org/10.1038/s41567-020-0932-7} {\bibfield  {journal} {\bibinfo
  {journal} {Nat. Phys.}\ }\textbf {\bibinfo {volume} {16}},\ \bibinfo {pages}
  {1050} (\bibinfo {year} {2020})}\BibitemShut {NoStop}%
\bibitem [{\citenamefont {Wood}\ \emph {et~al.}(2015)\citenamefont {Wood},
  \citenamefont {Biamonte},\ and\ \citenamefont
  {Cory}}]{art:Wood_ChannelRepresentations}%
  \BibitemOpen
  \bibfield  {author} {\bibinfo {author} {\bibfnamefont {C.~J.}\ \bibnamefont
  {Wood}}, \bibinfo {author} {\bibfnamefont {J.~D.}\ \bibnamefont {Biamonte}},\
  and\ \bibinfo {author} {\bibfnamefont {D.~G.}\ \bibnamefont {Cory}},\
  }\bibfield  {title} {\bibinfo {title} {Tensor networks and graphical calculus
  for open quantum systems},\ }\href
  {https://doi.org/https://doi.org/10.26421/QIC15.9-10-3} {\bibfield  {journal}
  {\bibinfo  {journal} {Quantum Inf. Comput.}\ }\textbf {\bibinfo {volume}
  {15}},\ \bibinfo {pages} {759} (\bibinfo {year} {2015})}\BibitemShut
  {NoStop}%
\bibitem [{\citenamefont {Temme}\ \emph {et~al.}(2017)\citenamefont {Temme},
  \citenamefont {Bravyi},\ and\ \citenamefont {Gambetta}}]{art:Temme}%
  \BibitemOpen
  \bibfield  {author} {\bibinfo {author} {\bibfnamefont {K.}~\bibnamefont
  {Temme}}, \bibinfo {author} {\bibfnamefont {S.}~\bibnamefont {Bravyi}},\ and\
  \bibinfo {author} {\bibfnamefont {J.~M.}\ \bibnamefont {Gambetta}},\
  }\bibfield  {title} {\bibinfo {title} {Error mitigation for short-depth
  quantum circuits},\ }\href {https://doi.org/10.1103/PhysRevLett.119.180509}
  {\bibfield  {journal} {\bibinfo  {journal} {Phys. Rev. Lett.}\ }\textbf
  {\bibinfo {volume} {119}},\ \bibinfo {pages} {180509} (\bibinfo {year}
  {2017})}\BibitemShut {NoStop}%
\bibitem [{\citenamefont {Flammia}\ and\ \citenamefont
  {Wallman}(2020)}]{art:Flammia}%
  \BibitemOpen
  \bibfield  {author} {\bibinfo {author} {\bibfnamefont {S.~T.}\ \bibnamefont
  {Flammia}}\ and\ \bibinfo {author} {\bibfnamefont {J.~J.}\ \bibnamefont
  {Wallman}},\ }\bibfield  {title} {\bibinfo {title} {Efficient estimation of
  {P}auli channels},\ }\href {https://doi.org/10.1145/3408039} {\bibfield
  {journal} {\bibinfo  {journal} {{ACM} Trans. Quantum Comput.}\ }\textbf
  {\bibinfo {volume} {1}},\ \bibinfo {pages} {1} (\bibinfo {year}
  {2020})}\BibitemShut {NoStop}%
\bibitem [{\citenamefont {Macchiavello}\ \emph {et~al.}(2004)\citenamefont
  {Macchiavello}, \citenamefont {Palma},\ and\ \citenamefont
  {Virmani}}]{art:QUIT_Pauli}%
  \BibitemOpen
  \bibfield  {author} {\bibinfo {author} {\bibfnamefont {C.}~\bibnamefont
  {Macchiavello}}, \bibinfo {author} {\bibfnamefont {G.~M.}\ \bibnamefont
  {Palma}},\ and\ \bibinfo {author} {\bibfnamefont {S.}~\bibnamefont
  {Virmani}},\ }\bibfield  {title} {\bibinfo {title} {Transition behavior in
  the channel capacity of two-quibit channels with memory},\ }\href
  {https://doi.org/10.1103/PhysRevA.69.010303} {\bibfield  {journal} {\bibinfo
  {journal} {Phys. Rev. A}\ }\textbf {\bibinfo {volume} {69}},\ \bibinfo
  {pages} {010303(R)} (\bibinfo {year} {2004})}\BibitemShut {NoStop}%
\bibitem [{\citenamefont {O'Brien}\ \emph {et~al.}(2004)\citenamefont
  {O'Brien}, \citenamefont {Pryde}, \citenamefont {Gilchrist}, \citenamefont
  {James}, \citenamefont {Langford}, \citenamefont {Ralph},\ and\ \citenamefont
  {White}}]{art:MLE}%
  \BibitemOpen
  \bibfield  {author} {\bibinfo {author} {\bibfnamefont {J.~L.}\ \bibnamefont
  {O'Brien}}, \bibinfo {author} {\bibfnamefont {G.~J.}\ \bibnamefont {Pryde}},
  \bibinfo {author} {\bibfnamefont {A.}~\bibnamefont {Gilchrist}}, \bibinfo
  {author} {\bibfnamefont {D.~F.~V.}\ \bibnamefont {James}}, \bibinfo {author}
  {\bibfnamefont {N.~K.}\ \bibnamefont {Langford}}, \bibinfo {author}
  {\bibfnamefont {T.~C.}\ \bibnamefont {Ralph}},\ and\ \bibinfo {author}
  {\bibfnamefont {A.~G.}\ \bibnamefont {White}},\ }\bibfield  {title} {\bibinfo
  {title} {Quantum process tomography of a controlled-{NOT} gate},\ }\href
  {https://doi.org/10.1103/PhysRevLett.93.080502} {\bibfield  {journal}
  {\bibinfo  {journal} {Phys. Rev. Lett.}\ }\textbf {\bibinfo {volume} {93}},\
  \bibinfo {pages} {080502} (\bibinfo {year} {2004})}\BibitemShut {NoStop}%
\bibitem [{\citenamefont {Smolin}\ \emph {et~al.}(2012)\citenamefont {Smolin},
  \citenamefont {Gambetta},\ and\ \citenamefont {Smith}}]{art:Gambetta_MLE}%
  \BibitemOpen
  \bibfield  {author} {\bibinfo {author} {\bibfnamefont {J.~A.}\ \bibnamefont
  {Smolin}}, \bibinfo {author} {\bibfnamefont {J.~M.}\ \bibnamefont
  {Gambetta}},\ and\ \bibinfo {author} {\bibfnamefont {G.}~\bibnamefont
  {Smith}},\ }\bibfield  {title} {\bibinfo {title} {Efficient method for
  computing the maximum-likelihood quantum state from measurements with
  additive gaussian noise},\ }\href
  {https://doi.org/10.1103/PhysRevLett.108.070502} {\bibfield  {journal}
  {\bibinfo  {journal} {Phys. Rev. Lett.}\ }\textbf {\bibinfo {volume} {108}},\
  \bibinfo {pages} {070502} (\bibinfo {year} {2012})}\BibitemShut {NoStop}%
\bibitem [{\citenamefont {D'Arrigo}\ \emph {et~al.}(2013)\citenamefont
  {D'Arrigo}, \citenamefont {Benenti}, \citenamefont {Falci},\ and\
  \citenamefont {Macchiavello}}]{art:QUIT_FullyCorrelatedDamping}%
  \BibitemOpen
  \bibfield  {author} {\bibinfo {author} {\bibfnamefont {A.}~\bibnamefont
  {D'Arrigo}}, \bibinfo {author} {\bibfnamefont {G.}~\bibnamefont {Benenti}},
  \bibinfo {author} {\bibfnamefont {G.}~\bibnamefont {Falci}},\ and\ \bibinfo
  {author} {\bibfnamefont {C.}~\bibnamefont {Macchiavello}},\ }\bibfield
  {title} {\bibinfo {title} {Classical and quantum capacities of a fully
  correlated amplitude damping channel},\ }\href
  {https://doi.org/10.1103/PhysRevA.88.042337} {\bibfield  {journal} {\bibinfo
  {journal} {Phys. Rev. A}\ }\textbf {\bibinfo {volume} {88}},\ \bibinfo
  {pages} {042337} (\bibinfo {year} {2013})}\BibitemShut {NoStop}%
\bibitem [{\citenamefont {D'Arrigo}\ \emph {et~al.}(2015)\citenamefont
  {D'Arrigo}, \citenamefont {Benenti}, \citenamefont {Falci},\ and\
  \citenamefont {Macchiavello}}]{art:QUIT_CorrelatedDamping}%
  \BibitemOpen
  \bibfield  {author} {\bibinfo {author} {\bibfnamefont {A.}~\bibnamefont
  {D'Arrigo}}, \bibinfo {author} {\bibfnamefont {G.}~\bibnamefont {Benenti}},
  \bibinfo {author} {\bibfnamefont {G.}~\bibnamefont {Falci}},\ and\ \bibinfo
  {author} {\bibfnamefont {C.}~\bibnamefont {Macchiavello}},\ }\bibfield
  {title} {\bibinfo {title} {Information transmission over an amplitude damping
  channel with an arbitrary degree of memory},\ }\href
  {https://doi.org/10.1103/PhysRevA.92.062342} {\bibfield  {journal} {\bibinfo
  {journal} {Phys. Rev. A}\ }\textbf {\bibinfo {volume} {92}},\ \bibinfo
  {pages} {062342} (\bibinfo {year} {2015})}\BibitemShut {NoStop}%
\bibitem [{\citenamefont {Hamada}(2002)}]{art:Hamada}%
  \BibitemOpen
  \bibfield  {author} {\bibinfo {author} {\bibfnamefont {M.}~\bibnamefont
  {Hamada}},\ }\bibfield  {title} {\bibinfo {title} {A lower bound on the
  quantum capacity of channels with correlated errors},\ }\href
  {https://doi.org/10.1063/1.1495537} {\bibfield  {journal} {\bibinfo
  {journal} {J. Math. Phys.}\ }\textbf {\bibinfo {volume} {43}},\ \bibinfo
  {pages} {4382} (\bibinfo {year} {2002})}\BibitemShut {NoStop}%
\end{thebibliography}%
\end{document}